\documentclass[11pt]{article}
\pdfoutput=1  
\usepackage[utf8]{inputenc}
\usepackage{amsmath,amssymb,amsthm}
\usepackage{graphicx}
\usepackage{booktabs}
\usepackage{natbib}
\usepackage[margin=1in]{geometry}
\usepackage{hyperref}
\usepackage{xcolor}
\usepackage{placeins}

\newtheorem{proposition}{Proposition}

\newcommand{\elpd}{\mathrm{elpd}}
\newcommand{\khat}{\hat{k}}
\newcommand{\Var}{\mathrm{Var}}
\newcommand{\E}{\mathbb{E}}
\newcommand{\N}{\mathcal{N}}
\newcommand{\GFF}{G_{FF}}
\newcommand{\GBB}{G_{BB}}
\newcommand{\GBF}{G_{BF}}
\newcommand{\RB}{\textsc{rb}}

\title{\bf Partial pooling predicts cross-validation reliability:\\
a closed-form triage and Rao--Blackwellised cure for hierarchical LOO}
\author{Aidan D. Bindoff\\[2pt]
Wicking Dementia Research \& Education Centre,\\
University of Tasmania, Hobart, Tasmania, Australia\\[2pt]
\small \texttt{aidan.bindoff@utas.edu.au} \quad ORCID: \texttt{0000-0002-0943-2702}}
\date{\today}

\begin{document}
\maketitle

\begin{abstract}
For hierarchical models, Pareto-smoothed importance-sampling LOO (PSIS-LOO) fails on
exactly the folds where a random-effect coordinate is data-driven and its group is
small. We show that the Gelman--Pardoe pooling factor and the associated structural
leverage predict these folds from the model structure and group sizes, without
forming importance weights. In Gaussian linear mixed models the leverage reduces to
group size, giving a design-time map that separates the $\khat_i>0.7$ folds with AUC
$0.96$; across replicated logistic GLMMs the post-fit, weight-free predictor reaches
AUC $0.81$. The cure for a flagged fold is integrated importance sampling
(marginalise the random-effect block, importance-sample only the base parameters),
the standing recommendation of the \texttt{loo} documentation and available in closed
or quadrature form in the latent Gaussian literature \citep{vehtari2016glvm,
merkle2019bayesian, burkner2021efficient}. We contribute its observation-level
specialisation for random-intercept GLMMs, an analytic Gaussian downdate and a 1-D
quadrature for Bernoulli, binomial and Poisson responses, packaged as a drop-in
\texttt{rb\_loo(fit)} for the Stan/\texttt{brms}/\texttt{loo} ecosystem. Against exact
refits, this marginalised estimator (RB-LOO) is $3\times$ more accurate than moment
matching on singleton-heavy logistic GLMMs. On an overdispersed count dataset with
$97$ failing folds, moment matching leaves $37$ uncorrected and is no more accurate
than raw PSIS-LOO, while RB-LOO reproduces the $82$-minute exact refit (elpd RMSE
$0.04$) at no cost. The error changes decisions: comparing this model against a
negative-binomial alternative, PSIS-LOO reports decisive evidence ($z=4.9$) and
\texttt{reloo} at the recommended threshold reports significant evidence ($z=3.4$)
for the more complex model, where an exact fold-wise analysis, reproduced by RB-LOO,
finds the two indistinguishable ($z=1.0$). A base--fiber decomposition explains the
split: a Schur complement separates case-deletion influence into a vertical (pooling)
term, which governs where PSIS-LOO fails, and a horizontal (variance-component) term,
which governs where RB-LOO's own residual importance sampling is strained. The
resulting two-level triage (pooling factor, then residual base-$\khat$) recovers the
exact answer while refitting only the few folds that need it. The marginalisation is
not new; the contributions are the a-priori pooling-factor triage, the packaged
closed forms, the head-to-head showing marginalisation dominates the
importance-sampling repairs, and the geometric account.
\end{abstract}

\noindent\textbf{Keywords:} cross-validation; leave-one-out; PSIS; hierarchical
models; partial pooling; Rao--Blackwellisation; importance sampling; pooling factor.

\section{Introduction}

Leave-one-out cross-validation (LOO-CV) is a standard tool for Bayesian model
assessment, and Pareto-smoothed importance sampling
\citep[PSIS-LOO;][]{vehtari2017practical, vehtari2024pareto} has made it cheap
enough for routine use: a single model fit yields an estimate of the expected
log predictive density (elpd) for every held-out observation, together with a
per-fold diagnostic $\khat_i$ that flags when the importance-sampling
approximation is unreliable. The diagnostic is a safety valve: when
$\khat_i > 0.7$, the recommendation is to refit the model with observation
$i$ removed \citep{vehtari2017practical}.

For hierarchical models this safety valve fires often, and in a structured way.
When an observation belongs to a small group (a singleton, in the limit), deleting
it removes most of the information about that group's random effect, so the
leave-one-out posterior of that coordinate moves far from the full-data posterior,
the importance weights acquire a heavy tail, and $\khat_i$ crosses the threshold.
The recommended cure, an exact refit, is what PSIS-LOO was meant to avoid; the
cheaper alternative, moment matching \citep{paananen2021implicitly}, still requires
a per-fold optimisation.

This paper makes three points, validated on replicated simulations and real data
and packaged as a drop-in tool.

\paragraph{C1 (predict).} The pooling factor of \citet{gelman2006weakly}, the
fraction of a group's posterior precision contributed by the prior, and the
associated per-observation structural leverage predict which folds PSIS-LOO fails
on, without forming importance weights. In a Gaussian linear mixed model the
per-observation Fisher information is constant, so the leverage reduces to group
size and the predictor is a design-time map, available before the data are seen; it
separates the $\khat_i>0.7$ folds with AUC $0.96$. For GLMMs the predictor uses the
fitted variance component and Fisher weights, so it requires one model fit, but
still no importance weights and no case deletion, and reaches AUC $0.81$ on
replicated logistic models with heterogeneous group sizes. We are not aware of prior
work mapping the pooling factor to PSIS-LOO reliability.

\paragraph{C2 (cure), and its provenance.} The cure for a flagged fold is to
integrate out the random-effect block and importance-sample only the base
parameters. The idea is not ours: it is integrated importance-sampling LOO, the
marginal-likelihood LOO of \citet{merkle2019bayesian}, the latent-marginalised LOO
with quadrature of \citet{vehtari2016glvm}, the closed forms of
\citet{burkner2021efficient}, and the standing recommendation of the \texttt{loo}
documentation for hierarchical PSIS failures. Marginalising the deleted coordinate,
rather than reweighting it, removes the heavy-tailed weight. We contribute the
closed-form observation-level specialisation for random-intercept GLMMs (an analytic
Gaussian rank-1 downdate; a 1-D quadrature for Bernoulli, binomial and Poisson),
packaged as a drop-in \texttt{rb\_loo(fit)}, and the head-to-head that the C1 flag
makes possible: on the folds that matter the marginalised estimator matches exact
refits at zero optimisation cost, is $3\times$ more accurate than moment matching
\citep{paananen2021implicitly} on singleton-heavy logistic GLMMs, and corrects
failures that moment matching leaves in place on a real count dataset. We call the
estimator RB-LOO (Rao--Blackwellised LOO): integrated IS is Rao--Blackwellisation of
the LOO estimator (Section~\ref{sec:theory}), so the name describes a standard
operation.

\paragraph{C3 (triage and explain).} A base--fiber bundle structure
\citep{bindoff2026fibr} organises the two failure modes. A Schur complement
(Section~\ref{sec:theory}) splits each observation's case-deletion influence into a
vertical (pooling) term, which governs where PSIS-LOO fails and which C1 exploits,
and a horizontal (variance-component) term, which governs where RB-LOO's residual
base importance sampling is strained. This gives a two-level triage (pooling factor,
then residual base-$\khat$) that recovers the exact LOO answer while refitting only
the few percent of folds where RB-LOO is itself unreliable.

The random-effect marginalisation (C2) is prior art. The structural leverage is
classical case-influence machinery \citep{wei1998generalized} in its modern
hierarchical form \citep{lovison2026augmented}; that influential observations produce
high $\khat$ is known \citep{peruggia1997variance, vehtari2017practical}; and INLA's
cross-validatory CPO computes the same marginal LOO predictive deterministically
\citep{rue2009inla, held2010posterior}. Our contribution is the composition: the
a-priori pooling-factor-to-PSIS-reliability map, the packaged closed forms, the
two-level triage, the head-to-head the flag enables, and the geometric account.
Section~\ref{sec:prior} draws the boundaries.

\section{Background and related work}\label{sec:prior}

\paragraph{PSIS-LOO and the $\khat$ diagnostic.}
Write the full-data posterior as $p(\theta \mid y)$ and the pointwise likelihood
as $p(y_i \mid \theta)$. PSIS-LOO estimates
$\elpd_i = \log p(y_i \mid y_{-i})$ by importance-reweighting the full-data
posterior draws with weights $w_i^{(s)} \propto 1/p(y_i \mid \theta^{(s)})$,
Pareto-smoothing the largest weights, and reporting the fitted Pareto shape
$\khat_i$ as a reliability diagnostic \citep{vehtari2017practical,
vehtari2024pareto}. The importance-sampling variance is infinite when the
leave-one-out posterior has heavier tails than the proposal in the relevant
direction; $\khat_i > 0.7$ signals this. \citet{peruggia1997variance} first
characterised the case-deletion importance weight variance.

\paragraph{The pooling factor.}
For a group $j$ with random effect $\alpha_j$, prior precision $1/\sigma_u^2$ and
data precision $I_j$ (the Fisher information the group's observations carry about
$\alpha_j$), the Gelman--Pardoe pooling factor is
$\pi_j = (1/\sigma_u^2)/(I_j + 1/\sigma_u^2)$, the fraction of the posterior
precision of $\alpha_j$ contributed by the prior \citep{gelman2006weakly}. A
singleton in a weakly-informative hierarchy has $\pi_j$ near $1$ (fully pooled,
data-starved); a large group has $\pi_j$ near $0$. The per-observation
\emph{structural leverage} $h_i = I_i/(I_j + 1/\sigma_u^2)$, with within-group
Fisher additivity $I_j=\sum_{i\in j}I_i$, satisfies $\sum_{i \in j} h_i = 1 - \pi_j$
\citep{bindoff2026fibr}, and is the shrinkage-weighted observation leverage of
\citet{lovison2026augmented}.

\paragraph{Conditional versus marginal LOO, and integrated importance sampling.}
A hierarchical model admits two LOO estimands \citep{merkle2019bayesian}: the
\emph{conditional} predictive $p(y_i \mid \theta, \alpha_{j(i)})$ integrated over
the leave-one-out posterior, and the \emph{marginal} (integrated) predictive
$p(y_i \mid y_{-i})$ with the random effects marginalised. These coincide as
estimands (integrating the conditional likelihood over the joint leave-one-out
posterior, which includes $\alpha$, gives the marginal predictive), but they
differ sharply as importance-sampling targets: PSIS over the conditional likelihood
reweights the full posterior in the $\alpha$ direction, which is exactly where the
data-driven random effect makes the weights heavy-tailed, whereas marginalising
$\alpha$ first removes that direction from the importance sampling. This
\emph{integrated importance sampling} is well established. \citet{merkle2019bayesian}
argue for the marginal predictive and note its better-behaved weights;
\citet{vehtari2016glvm} compute LOO for latent Gaussian models by marginalising the
latent value analytically (Gaussian likelihood) or by 1-D quadrature (logistic,
Poisson), the same machinery this paper specialises to grouped GLMMs;
\citet{burkner2021efficient} give closed-form integrated LOO with a rank-1 downdate
for non-factorised normal and Student-$t$ models; and the \texttt{loo} documentation
recommends integrating out group-specific parameters, ``analytically or by
quadrature,'' precisely to avoid PSIS failure in hierarchical models. INLA computes
the same marginal LOO predictive deterministically as its cross-validatory CPO
\citep{rue2009inla, held2010posterior}, and leave-group-out CV
\citep{liu2022leave} targets a related group-level predictive. RB-LOO is a
closed-form, triage-equipped instance of this integrated-IS family, operating
post-hoc on existing MCMC draws within the \texttt{loo} ecosystem; throughout we
compare only to refits that target the same (marginal) estimand, so the comparison
is fair.

\paragraph{Cures for high $\khat$.}
Given a high-$\khat$ fold, the alternatives to a refit are all
importance-sampling repairs that keep the conditional target and fix the proposal.
Moment matching \citep{paananen2021implicitly} applies an affine transformation to
the posterior draws for each high-$\khat$ observation, chosen by a per-fold
optimisation; mixture / implicitly-adaptive importance sampling
\citep{silva2024robust} builds a mixture proposal with guaranteed finite variance
for the many-high-$\khat$ regime. We take moment matching (in its robust
split-transformation form, the recommended default) as the primary state-of-the-art
comparator, and contrast the mechanism with mixture IS: these repair the proposal
for the \emph{conditional} weights, whereas RB-LOO removes the offending direction
from the problem by marginalisation. Our empirical finding is that on
singleton-driven folds the affine repair cannot represent a random effect reverting
to its prior, so marginalisation dominates.

\paragraph{Hierarchical leverage and influence.}
The generalised leverage $g_i^\top G^{-1} g_i$ and its Schur decomposition are
classical \citep{wei1998generalized}; the hierarchical case is treated by
\citet{hodges2001counting}, \citet{cui2010partitioning} (who partition effective
degrees of freedom and cite the pooling factor) and, most directly,
\citet{lovison2026augmented}, whose augmented hat matrix gives the subject-level
structural leverage. These works hold the variance components fixed (Lovison) or
measure complexity rather than influence (Cui--Hodges), and none addresses
cross-validation reliability; Section~\ref{sec:theory} demarcates our claim
accordingly.

\section{Theory: the orthogonal decomposition of case-deletion influence}\label{sec:theory}

We work in the base--fiber coordinates of the hierarchical bundle
\citep{bindoff2026fibr}: base $\theta$ (hyperparameters: population effects,
variance components, GP hypers), fiber $\alpha$ (group coordinates), with Fisher
metric
\[
G = \begin{pmatrix} \GBB & \GBF \\ G_{FB} & \GFF \end{pmatrix},
\]
$\GFF$ block-diagonal across groups. An observation $i$ touches only its own
group, so its score splits as $g_i = (g_i^B, g_i^F)$ with $g_i^F$ supported on the
$\alpha_{j(i)}$ block.

\begin{proposition}[Orthogonal leverage identity, T1]\label{thm:t1}
With $A = -\GFF^{-1} G_{FB}$, $\tilde g_i = g_i^B + A^\top g_i^F$ the horizontal
lift of the score, and $M = \GBB - \GBF \GFF^{-1} G_{FB}$ the Schur complement (the
marginal base precision),
\[
g_i^\top G^{-1} g_i
= \underbrace{g_i^{F\top}\GFF^{-1} g_i^F}_{\text{vertical (pooling)}}
+ \underbrace{\tilde g_i^\top M^{-1}\tilde g_i}_{\text{horizontal (base leverage)}}.
\]
\end{proposition}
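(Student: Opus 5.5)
The plan is to write the block inverse of $G$ through the Schur complement of the fiber block and then complete the square. Assume $G$ is symmetric positive definite, as the Fisher metric of a proper posterior is, so $G_{FB}=\GBF^\top$ and both $\GFF$ and $M$ are invertible and positive definite. The factorisation to use is
\[
G = L^\top \begin{pmatrix} M & 0 \\ 0 & \GFF \end{pmatrix} L,
\qquad
L = \begin{pmatrix} I & 0 \\ \GFF^{-1} G_{FB} & I \end{pmatrix}
  = \begin{pmatrix} I & 0 \\ -A & I \end{pmatrix}.
\]
It is checked by multiplying out. The $(B,B)$ entry is $M + \GBF\GFF^{-1}\GFF\GFF^{-1}G_{FB} = \GBB$, the off-diagonal entries are $\GBF$ and $G_{FB}$, and the $(F,F)$ entry is $\GFF$.

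Inverting gives $G^{-1} = L^{-1}\,\mathrm{diag}(M^{-1},\GFF^{-1})\,L^{-\top}$, with
\[
L^{-1}=\begin{pmatrix} I & 0\\ A & I\end{pmatrix},
\qquad
L^{-\top}=\begin{pmatrix} I & A^\top\\ 0 & I\end{pmatrix}.
\]
Applying $L^{-\top}$ to $g_i=(g_i^B,g_i^F)$ gives
\[
L^{-\top} g_i = \begin{pmatrix} g_i^B + A^\top g_i^F \\ g_i^F\end{pmatrix}
= \begin{pmatrix}\tilde g_i\\ g_i^F\end{pmatrix}.
\]
So the first component is exactly the horizontal lift defined in the statement. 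The quadratic form then becomes
\[
g_i^\top G^{-1} g_i = (L^{-\top}g_i)^\top \mathrm{diag}(M^{-1},\GFF^{-1})(L^{-\top}g_i)
= \tilde g_i^\top M^{-1}\tilde g_i + g_i^{F\top}\GFF^{-1}g_i^F,
\]
which is the claimed identity. Equivalently, one can take the standard block-inverse formula and expand the four cross terms; the $\GBF\GFF^{-1}$ pieces then assemble into $\tilde g_i$. I would mention this route as a check.

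Two remarks complete the argument and interpret it. First, because $\GFF$ is block-diagonal and $g_i^F$ is supported on $\alpha_{j(i)}$, the vertical term reduces to the single group block, $g_i^{F\top}(\GFF)_{jj}^{-1}g_i^F$. For a scalar random intercept this is $I_i/(I_j+1/\sigma_u^2)=h_i$, using the background definition of structural leverage; here one identifies $g_i^F g_i^{F\top}$ with $I_i$ in expectation. Second, the two terms are orthogonal in the $G^{-1}$ geometry because $L$ block-diagonalises $G$. This justifies the "vertical/horizontal" names.

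The main obstacle is not algebraic but one of conventions. The sign of $A$ and which transpose appears in $\tilde g_i$ must match the statement. With $A=-\GFF^{-1}G_{FB}$ one has to verify that $L^{-\top}$ produces $g_i^B + A^\top g_i^F$ and not $g_i^B - A^\top g_i^F$. I would check this carefully first, including the scalar $2\times 2$ case as a sanity test.
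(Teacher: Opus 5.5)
Your proof is correct and is essentially the paper's argument. The paper writes out the Schur-complement block inverse of $G$ and collects the four blocks into $g_i^{F\top}\GFF^{-1}g_i^F + (g_i^B-\GBF\GFF^{-1}g_i^F)^\top M^{-1}(g_i^B-\GBF\GFF^{-1}g_i^F)$. Your factorisation $G=L^\top\mathrm{diag}(M,\GFF)L$ is the same Schur decomposition written as a congruence, which makes the sign check on $A^\top$ automatic.

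Positive definiteness is more than you need. Symmetry of $G$ (used by you and by the paper when writing $A^\top=-\GBF\GFF^{-1}$) together with invertibility of $\GFF$ and $M$ suffices.
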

\begin{proof}
The block inverse of $G$ is
$G^{-1} = \left(\begin{smallmatrix} M^{-1} & -M^{-1}\GBF\GFF^{-1}\\ -\GFF^{-1}G_{FB}M^{-1} & \GFF^{-1}+\GFF^{-1}G_{FB}M^{-1}\GBF\GFF^{-1}\end{smallmatrix}\right)$.
Expanding $g_i^\top G^{-1} g_i$ with $g_i=(g_i^B,g_i^F)$ and collecting the four
blocks gives $g_i^{F\top}\GFF^{-1}g_i^F + (g_i^B-\GBF\GFF^{-1}g_i^F)^\top M^{-1}(g_i^B-\GBF\GFF^{-1}g_i^F)$;
since $A^\top = -\GBF\GFF^{-1}$, the second term is $\tilde g_i^\top M^{-1}\tilde g_i$.
\end{proof}
It is a purely algebraic (Schur) identity, requiring only that $G$ and $G_{FF}$ be
invertible (equivalently $M$ invertible); block-diagonality of $G_{FF}$ is not
needed here, only later for the per-group decomposition. The reproduction script
confirms it to $4\times10^{-16}$. Geometrically it is the Pythagorean split of the
score into a vertical (fiber) and a horizontal (base) component under the
metric-orthogonal connection $A$; we use that reading only as interpretation. The vertical term summed over a group equals
$1-\pi_j$ (T3, an exact consequence of the Gelman--Pardoe definition); it is the
\emph{capacity} for a group's coordinate to move under deletion. The horizontal term
is the influence propagated into the hyperparameter posterior.

The next proposition justifies the name Rao--Blackwellised. It is a
variance-reduction statement, stated against the correct baseline: the full-marginal
importance-sampling estimator of the same marginal LOO predictive, not the
conditional estimand.

\begin{proposition}[Rao--Blackwell variance reduction]\label{thm:rb}
Fix observation $i$. The PSIS-LOO importance weight is
$W_i(\theta,\alpha)=1/p(y_i\mid\theta,\alpha_{j(i)})$, and under the full-data
posterior $\E[W_i]=1/p(y_i\mid y_{-i})$, the reciprocal of the marginal LOO
predictive (the standard case-deletion identity). Let $\bar W_i(\theta)=\E[W_i\mid\theta]
= 1/p(y_i\mid\theta,y_{-i})$ be the base-conditional expectation, which is the
integrand RB-LOO forms by marginalising $\alpha$. Both $W_i$ and $\bar W_i$ are
unbiased for $\E[W_i]$, and, with all expectations and variances taken under the
full-data posterior,
\[
\Var\big(\bar W_i\big)\;\le\;\Var\big(W_i\big),
\]
by the Rao--Blackwell (conditional-Jensen) inequality, with equality iff $W_i$ is
$\theta$-measurable.
\end{proposition}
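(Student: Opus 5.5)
The plan has three steps: verify the two identities in the statement, prove unbiasedness by the tower property, and then get the variance comparison from the law of total variance. Throughout, $p(\cdot\mid y)$ is the full-data posterior. I assume the model factorises so that $p(y\mid\theta,\alpha)=p(y_i\mid\theta,\alpha_{j(i)})\,p(y_{-i}\mid\theta,\alpha)$, and that the joint density $p(y_i,\theta,\alpha\mid y_{-i})$ is well defined.

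\emph{Step 1: the case-deletion identity.} Write $p(\theta,\alpha\mid y)=p(\theta,\alpha\mid y_{-i})\,p(y_i\mid\theta,\alpha_{j(i)})/p(y_i\mid y_{-i})$. Multiplying by $W_i$ cancels the likelihood factor, and integrating gives $\E[W_i]=1/p(y_i\mid y_{-i})$, provided the leave-one-out posterior is proper.

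\emph{Step 2: the base-conditional weight.} The same computation is carried out conditionally on $\theta$. The full-data conditional is
\[
p(\alpha\mid\theta,y)\propto p(\alpha\mid\theta,y_{-i})\,p(y_i\mid\theta,\alpha_{j(i)}),
\]
with normaliser $p(y_i\mid\theta,y_{-i})=\int p(y_i\mid\theta,\alpha_{j(i)})\,p(\alpha\mid\theta,y_{-i})\,d\alpha$. Hence
\[
\E[W_i\mid\theta]=\int p(\alpha\mid\theta,y_{-i})\,d\alpha\,\big/\,p(y_i\mid\theta,y_{-i})=1/p(y_i\mid\theta,y_{-i}).
\]
This identifies $\bar W_i$ with the integrand RB-LOO forms. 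In the random-intercept GLMM this is the 1-D integral over $\alpha_{j(i)}$, computed either by the Gaussian downdate or by quadrature.

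\emph{Step 3: unbiasedness and variance.} By the tower property, $\E[\bar W_i]=\E[\E[W_i\mid\theta]]=\E[W_i]$, so both quantities are unbiased for $1/p(y_i\mid y_{-i})$. The law of total variance gives
\[
\Var(W_i)=\Var(\bar W_i)+\E\big[\Var(W_i\mid\theta)\big],
\]
and the second term is nonnegative, which yields the inequality. Equivalently, conditional Jensen applied to $x\mapsto x^2$ gives $\bar W_i^2\le\E[W_i^2\mid\theta]$, and taking expectations finishes the argument.

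\emph{Equality case.} Equality holds iff $\E[\Var(W_i\mid\theta)]=0$, that is, iff $W_i=\E[W_i\mid\theta]$ almost surely. This is exactly the condition that $W_i$ is $\theta$-measurable (up to null sets).

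The main obstacle is integrability, not algebra. $\E[W_i]$ is finite under the properness assumption, but $\Var(W_i)$ may be $+\infty$. That is precisely the heavy-tailed regime with $\khat_i>0.7$, and it is the case of practical interest. I would therefore state the inequality in $[0,\infty]$: the total-variance decomposition holds with values in $[0,\infty]$ whenever $\E[W_i]<\infty$. In this form the result also covers the important case $\Var(\bar W_i)<\infty=\Var(W_i)$. I would also note that the statement concerns the raw (unsmoothed) weights. Whether Pareto smoothing preserves the ordering is not part of the claim.
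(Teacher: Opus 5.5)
Your proposal is correct and matches the paper's proof: the tower property gives unbiasedness, and the law of total variance gives the inequality, with equality iff $\Var(W_i\mid\theta)=0$ almost surely. Your derivations of the case-deletion identity and of $\bar W_i=1/p(y_i\mid\theta,y_{-i})$, which the paper takes as standard, are correct additions. One small point: once you work in $[0,\infty]$, the equality characterisation needs $\Var(W_i)<\infty$, because if $\Var(\bar W_i)=\Var(W_i)=\infty$ equality holds without $W_i$ being $\theta$-measurable.
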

\begin{proof}
$\E[\bar W_i]=\E[\E[W_i\mid\theta]]=\E[W_i]$ by the tower rule, and
$\Var(W_i)=\Var(\E[W_i\mid\theta])+\E[\Var(W_i\mid\theta)]\ge\Var(\bar W_i)$, with
equality iff $\Var(W_i\mid\theta)=0$ almost surely.
\end{proof}
The inequality is useful when the excess variance lives in the fiber. If the
small-group random effect makes $\Var(W_i)=\infty$ while the base is well identified,
so that $\Var(\bar W_i)=\Var(\E[W_i\mid\theta])<\infty$, marginalising replaces an
infinite-variance importance-sampling problem with a finite-variance one. That
finiteness is a condition on base identifiability, not a corollary of the inequality,
which otherwise reads $\infty\le\infty$. The reciprocal-and-log step to elpd inherits
the usual self-normalised consistency, not unbiasedness, and PSIS smoothing sits on
top of either estimator. Empirically (Section~\ref{sec:experiments}) the marginalised
weights have lighter tails and low base-$\khat$ where the conditional weights fail.

\begin{proposition}[The residual-strain predictor, T2]\label{thm:t2}
To first order (one-step case deletion), the variance of the log RB-LOO base
importance weight for deleting observation $i$, under the base posterior with
covariance $M^{-1}$, is $\tilde g_i^\top M^{-1}\tilde g_i$, the horizontal (base)
leverage. Hence the base leverage is an approximate, closed-form predictor of the
residual base-$\khat$ after Rao--Blackwellisation.
\end{proposition}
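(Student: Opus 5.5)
The plan is to reduce the statement to a delta-method computation on the marginal (base) posterior. First I would write the RB-LOO base log weight explicitly. After marginalising $\alpha$, the log importance weight for deleting $i$ is
\[
\log \bar W_i(\theta) = -\log p(y_i\mid\theta,y_{-i}),
\]
up to a constant. The quantity $-\log p(y_i\mid\theta,y_{-i})$ is minus the marginal pointwise log-likelihood contribution of observation $i$ along the base. Its gradient in $\theta$ at the posterior mode $\hat\theta$ is therefore minus the marginal score of observation $i$.

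The key step is to identify this marginal score with the horizontal lift $\tilde g_i$. I would argue through the implicit-function (profile) derivative. Perturbing $\theta$ moves the conditional optimum of the fiber by $d\hat\alpha/d\theta = -\GFF^{-1}G_{FB} = A$. This is the same connection used in Proposition~\ref{thm:t1}. By the chain rule, the total derivative of observation $i$'s log-likelihood along the base is $g_i^B + A^\top g_i^F = \tilde g_i$. Under the Laplace/Gaussian approximation, the fiber-integrated log-likelihood agrees with the profiled one to first order, so its gradient is $\tilde g_i$. In the Gaussian LMM this is exact, since the marginal of a Gaussian is the Schur-complement Gaussian.

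Next I would linearise around $\hat\theta$, giving $\log\bar W_i(\theta)\approx c-\tilde g_i^\top(\theta-\hat\theta)$. Standard marginalisation of a Gaussian approximation to the joint posterior with precision $G$ says the base posterior has precision equal to the Schur complement $M$, so $\Var(\theta)\approx M^{-1}$. The variance of a linear functional then gives $\Var(\log\bar W_i)\approx\tilde g_i^\top M^{-1}\tilde g_i$, which is the horizontal term of Proposition~\ref{thm:t1}.

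For the ``hence'' clause I would use the standard link to the Pareto diagnostic. If the log weight is approximately $\N(0,v)$, the weights are lognormal, and the effective tail of the smoothed weights, hence $\khat$, increases monotonically with $v$. Known relations such as $\mathrm{ESS}/S\approx e^{-v}$ make this precise. Since this relation is heuristic, the claim stays ``approximate predictor''.

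The main obstacle is the identification step. The marginal-likelihood gradient equals $\tilde g_i$ only up to terms involving the derivative of $\log\det$ of the conditional fiber precision with respect to $\theta$, and of the variance component in particular. These Laplace correction terms are $O(1)$ per group, so they are not negligible for singleton groups. I would first try to show that they cancel in the one-step deletion difference, or that they enter only at second order in the deletion. Failing that, I would state the result explicitly at the level of the Laplace (one-step) approximation, as the proposition's ``to first order'' wording permits.
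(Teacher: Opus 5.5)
The paper gives no derivation of T2 beyond the phrase ``to first order (one-step case deletion)'' plus an empirical check. Your delta-method computation is the natural reading of that phrase. Equivalently, the base block of the one-step shift $G^{-1}g_i$ is $M^{-1}\tilde g_i$, and a mean shift $\delta$ of $\N(\hat\theta,M^{-1})$ has log-ratio variance $\delta^\top M\delta=\tilde g_i^\top M^{-1}\tilde g_i$.

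The weak point is your identification step, where two claims are wrong as written. First, the step is not exact in the Gaussian LMM once $\tau=\log\sigma_u$ is in the base, which is the case T2 is about; it is exact only with the variance components held fixed. Take a singleton with $R=y_i-\mu$, so that $p(y_i\mid\theta,y_{-i})=\N(y_i\mid\mu,\sigma^2+\sigma_u^2)$. Evaluate the worked example's lift at $\hat\alpha_j=(1-\pi_j)R$ and $r_i=\pi_jR$, using the observed information that makes $\tilde g_{i,\tau}\neq0$ there. Then
\[
\partial_\tau\log p(y_i\mid\theta,y_{-i})=(1-\pi_j)\Big(\frac{R^2}{\sigma^2+\sigma_u^2}-1\Big),
\qquad
\tilde g_{i,\tau}=\frac{2\pi_j(1-\pi_j)R^2}{\sigma^2+\sigma_u^2}.
\]
Second, the log-det discrepancy you hope to cancel does not cancel. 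The $-(1-\pi_j)$ is an $O(1)$ term that survives the deletion difference, and even the quadratic coefficients differ, by the factor $2\pi_j$. The reason is that a singleton's leave-one-out conditional mode of $\alpha_j$ is $0$ for every $\sigma_u$. So $\tau$ reaches the predictive only through its variance, never through a profiled mode path, and ``$A_{-i}\approx A$ to first order in the deletion'' fails exactly on the folds that matter.

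So make your fallback the argument, and fix the approximation up front. Replace the joint posterior by $\N(\hat\phi,G^{-1})$ with $\phi=(\theta,\alpha)$, and linearise $\ell_i=\log p(y_i\mid\theta,\alpha_{j(i)})$ at $\hat\phi$. Write the weight as $\bar W_i(\theta)=\E[e^{-\ell_i}\mid\theta,y]$, taken against the full-data conditional, so that only $A$ appears and never $A_{-i}$.

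Gaussian conditioning gives $\alpha\mid\theta\sim\N(\hat\alpha+A(\theta-\hat\theta),\GFF^{-1})$, with covariance that does not depend on $\theta$. The lognormal correction $\tfrac12 g_i^{F\top}\GFF^{-1}g_i^F$ is therefore a constant, and $\log\bar W_i(\theta)=c-\tilde g_i^\top(\theta-\hat\theta)$ holds exactly within the approximation. Its variance under $\N(\hat\theta,M^{-1})$ is $\tilde g_i^\top M^{-1}\tilde g_i$.

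As a check, the law of total variance applied to the linearised $\log W_i$ reproduces Proposition~\ref{thm:t1}: the horizontal term is $\Var(\E[\log W_i\mid\theta])$ and the vertical term is $\E[\Var(\log W_i\mid\theta)]$. What this approximation discards is exactly the $\theta$-dependence of $\GFF$, the funnel term displayed above. That is why the paper presents T2 as a predictor that under-shoots the largest shifts, validated empirically, rather than as an identity.

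Your lognormal link to $\khat$ is acceptable as a heuristic. Note only that exactly lognormal weights have asymptotic Pareto shape $0$, so the monotone dependence on $v$ is a finite-$S$ statement about the estimated $\khat$.
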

Empirically the one-step base influence predicts the actual $\sigma_u$
leave-one-out shift at correlation $0.987$ and the residual base-$\khat$ at Spearman
$0.62$--$0.71$ in the strained regime; being first-order, it under-shoots the two
largest shifts, so we use it as a predictor and diagnostic, not an exact identity.
As a practical flag the residual base-$\khat$ itself is used directly, and
Section~\ref{sec:experiments} shows either quantity catches the folds where RB-LOO
diverges from an exact refit (AUC $0.90$--$0.93$).

\paragraph{Worked example (Gaussian random intercept).} With base
$(\mu,\tau=\log\sigma_u)$, fiber $\alpha$, $\sigma$ known, and residual
$r_i = y_i-\mu-\alpha_j$, the direct base score has zero $\tau$-component (an
observation carries no first-order information about $\sigma_u$ on its own), yet
its leave-one-out influence on $\sigma_u$ is real and flows through the connection:
\[
\tilde g_{i,\tau} \;\propto\; r_i\,\alpha_j / (\sigma_u^2\, G_{FF,j}),
\]
proportional to the residual times the group effect, amplified by $M^{-1}_{\tau\tau}\sim 1/J$
for few groups. This is the singleton/outlier-group pathology, derived rather than
assumed.

\begin{proposition}[RB-LOO regimes, T4]\label{thm:t4}
Conditional on the base draw, the RB-LOO integrand is: \emph{analytically exact}
for a Gaussian random-effect block (the rank-1 downdate of
Section~\ref{sec:method}); \emph{numerically exact} for a scalar Bernoulli,
binomial or Poisson random-intercept block, where the one-dimensional integral over
$\alpha_j$ is evaluated by deterministic quadrature to grid resolution (independent
of any auxiliary-variable representation); and \emph{approximate} for a non-linear
block (a smooth change-point, a GP length-scale), with error controlled by the
non-Gaussianity of the conditional and degrading predictably
(Section~\ref{sec:boundary}). The remaining Monte Carlo error, in every case, is
the base importance sampling, diagnosed by the base-$\khat$ and bounded by
Proposition~\ref{thm:t2}.
\end{proposition}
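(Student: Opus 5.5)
The plan is to fix a base draw $\theta$ and study the integrand RB-LOO forms, $\bar W_i(\theta)=1/p(y_i\mid\theta,y_{-i})$. We rewrite it as a ratio of one-dimensional (or block) integrals over the fiber coordinate of group $j=j(i)$:
\[
p(y_i\mid\theta,y_{-i})=\frac{\int p(y_j\mid\theta,\alpha_j)\,p(\alpha_j\mid\theta)\,d\alpha_j}{\int p(y_{j,-i}\mid\theta,\alpha_j)\,p(\alpha_j\mid\theta)\,d\alpha_j}.
\]
This uses block-diagonality of $\GFF$ together with the fact that observation $i$ touches only $\alpha_j$ (Section~\ref{sec:theory}). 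The other groups' coordinates factor out and cancel between numerator and denominator, so the whole problem reduces to a per-group fiber integral conditional on $\theta$. We then treat the three regimes separately.

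\textbf{Gaussian regime.} With a Gaussian likelihood and a Gaussian prior on $\alpha_j$, both integrals are Gaussian marginal likelihoods. The ratio is therefore the Gaussian predictive $\N(y_i;\,\mu_i^{(-i)},\,s_i^{2(-i)})$, whose moments are obtained from the full-group conditional posterior by a Sherman--Morrison rank-1 downdate of the precision. This is the closed form of Section~\ref{sec:method}, essentially the argument of \citet{burkner2021efficient}. No approximation enters, so the integrand is analytically exact.

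\textbf{Scalar Bernoulli, binomial and Poisson regime.} Here both integrands are smooth, positive and log-concave in the scalar $\alpha_j$ under a Gaussian prior. Gauss--Hermite (or adaptive) quadrature centred at the conditional mode converges geometrically in the number of nodes for such analytic integrands. The ratio therefore converges as well, since the denominator is bounded away from zero. The construction uses only the likelihood and the prior, not any Pólya--Gamma or other augmentation, which gives the independence claim. "Numerically exact" then means that the error is controlled by the grid resolution and can be driven below Monte Carlo error.

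\textbf{Non-linear regime.} For a non-linear block (change-point, GP length-scale) we use a Laplace or Gaussian surrogate of the conditional. Its error is bounded by standard Laplace-remainder terms in the third and fourth standardised derivatives of the log conditional, i.e.\ by its non-Gaussianity. This is the part of the statement I expect to be the main obstacle. "Degrading predictably" is an empirical claim, so for it I would only establish the remainder bound and defer the rest to Section~\ref{sec:boundary}.

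\textbf{Remaining Monte Carlo error.} Once the fiber integral is deterministic, the estimator is $\sum_s \bar W_i(\theta^{(s)})$ over base draws, i.e.\ importance sampling on the base alone. Its reliability is governed by the tail of $\bar W_i$, which is diagnosed by the base-$\khat$. Proposition~\ref{thm:rb} shows that this variance cannot exceed the conditional one, and Proposition~\ref{thm:t2} gives its first-order log-variance as $\tilde g_i^\top M^{-1}\tilde g_i$. Together these supply the stated bound, understood to first order.
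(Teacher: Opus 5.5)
Your proposal is correct and follows the same regime-by-regime justification the paper relies on: the Gaussian rank-1 downdate, a self-normalised 1-D quadrature of the ratio of group marginal likelihoods, the empirical mapping of Section~\ref{sec:boundary} for non-linear blocks, and Propositions~\ref{thm:rb} and~\ref{thm:t2} for the residual base importance sampling; your ratio identity also makes explicit what the paper's ``self-normalised quadrature'' leaves implicit. Three minor corrections: Gauss--Hermite converges root-exponentially rather than geometrically on strip-analytic integrands such as the logistic likelihood, though convergence is all you need; the shipped rule is a fixed prior-scaled grid over $\pm 6\sigma_u$ rather than a mode-centred one, so ``grid resolution'' must also control truncation when the conditional sits in the prior tail, which is why the paper widens it for extreme singleton responses; and the third/fourth-derivative Laplace terms are leading-order indicators rather than bounds when there is no asymptotic parameter, with the paper finding that the mode-centred (Fisher/pooling) surrogate is the one that degrades while a moment-matched Gaussian stays calibrated.
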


\paragraph{Demarcation.} The structural leverage of Theorem~\ref{thm:t1}'s
vertical term is \citet{lovison2026augmented}'s subject hat value; we do not claim
it. What is not in that literature is (i) the case-deletion influence on the
variance components (the horizontal term, which lies outside the
variance-components-fixed hat matrix) and (ii) the cross-validation
application: the pooling factor as a PSIS-reliability predictor and the base
leverage as an RB-residual predictor. We claim the CV application and the
composition, and cite \citet{lovison2026augmented}, \citet{cui2010partitioning}
and \citet{wei1998generalized} to bound it.

\section{Method: pooling-factor triage and RB-LOO}\label{sec:method}

\paragraph{Triage (C1).} Compute the pooling factor $\pi_j$ and the
per-observation structural leverage $h_i$ from the Fisher blocks. Two regimes
differ. In a Gaussian LMM the per-observation Fisher information is the constant
$1/\sigma^2$, so $h_i$ depends only on group size (and, through $\sigma_u$, a single
fitted scalar): the ranking of folds is a design quantity, available before the
outcomes are seen. In a GLMM the Fisher weight $I_i$ depends on the fitted mean, so
$h_i$ is evaluated at the posterior mean of one model fit; it is still formed
without importance weights and without case deletion, but it is post-fit, not
pre-data. The triage is thus design-time relative to the cross-validation, not
before fitting. Folds with high $h_i$ are flagged as likely PSIS-LOO failures. Where
group sizes are constant (as in both real-data models of Section~\ref{sec:ereal})
the group-size content of $h_i$ is degenerate and the flag's discrimination comes
from the fitted-mean content, a point we return to in Section~\ref{sec:ereal}.

\paragraph{RB-LOO (C2).} For each observation $i$ in group $j$ and each base draw
$\phi^{(s)} = (\beta^{(s)}, \sigma_u^{(s)}, \dots)$, form the RE-marginalised
leave-one-out predictive
\[
L^{(s)}_{\RB,i} = \log \int p(y_i \mid \alpha, \phi^{(s)})\,
p(\alpha \mid y_{j\setminus i}, \phi^{(s)})\, d\alpha,
\]
and feed the matrix $L_{\RB}$ to \texttt{loo} for the base-only PSIS estimate. For
a \textbf{Gaussian} block the integral is an analytic rank-1 downdate: with
$s^2=\sigma^2$, downdated group precision $P_{-i}=1/\sigma_u^2 + (n_j-1)/s^2$ and
downdated mean $\hat\alpha_{-i} = P_{-i}^{-1}\sum_{o\in j\setminus i}(y_o-\eta_o)/s^2$,
\[
L^{(s)}_{\RB,i} = \log \N\!\big(y_i \,\big|\, \eta_i + \hat\alpha_{-i},\; s^2 + P_{-i}^{-1}\big).
\]
For \textbf{Bernoulli, binomial and Poisson} blocks the 1-D integral over
$\alpha_j\sim\N(0,\sigma_u^2)$ is evaluated by a self-normalised quadrature on a
grid rescaled per draw by $\sigma_u^{(s)}$ (the shipped package uses $64$ nodes over
$\pm 6\sigma_u$ by default, widened for singleton observation-level effects with
extreme responses), at negligible cost and exact to grid resolution. Because
$\alpha_j$ is integrated out rather than importance-reweighted, the fiber-driven
heavy tail is gone and the residual importance sampling is over the well-behaved
base. The real-data results (Section~\ref{sec:ereal}) call the shipped
\texttt{rb\_loo(fit)}; the simulation studies use a reference implementation of the
same estimator, which agrees with it up to the quadrature grid.

\paragraph{Refit flag (C3).} Folds whose residual base-$\khat$ still exceeds $0.7$
after Rao--Blackwellisation are sent to an exact refit. For the single
random-intercept GLMMs of Sections~\ref{sec:emm}--\ref{sec:ereal} this set is
empty (one deletion barely moves a well-identified base), so RB-LOO alone
suffices. In the few-groups, poorly-identified-base regime the set is non-empty and
the flag earns its place: Section~\ref{sec:basestress} shows the residual
base-$\khat$ catches, with AUC $0.93$, exactly the folds where RB-LOO diverges from
an exact refit, so the two-level triage recovers the exact answer while refitting
only a few percent of folds.

\paragraph{Software.} The \texttt{rbloo} package exposes this as a single call
\texttt{rb\_loo(fit)}, an S3 method over \texttt{brmsfit} / \texttt{stanreg}
returning, per observation, the pooling factor and structural leverage (a-priori
triage), the RB-LOO pointwise elpd and base $\khat$, and the refit flag. It
targets the \texttt{loo} ecosystem as a drop-in.

\section{Experiments}\label{sec:experiments}

All simulations are replicated and reported with Monte Carlo standard error
(MCSE); point estimates from single datasets are noisy at the group sizes that
produce PSIS failures. Software versions: \texttt{R} 4.6.0, \texttt{loo} 2.9.0,
\texttt{brms} 2.23.0 with \texttt{cmdstan}/\texttt{rstan}; the Gaussian case uses
a hand-written conjugate Gibbs sampler.

\paragraph{What ``matches exact refit'' does and does not show.} On a singleton
fold RB-LOO and \texttt{reloo} compute the same marginal integral (RB-LOO by
quadrature over the random-effect prior, \texttt{reloo} by refitting and resampling
the dropped level from that same prior), so agreement there confirms that RB-LOO
computes the intended quantity accurately but cannot by itself falsify it. The
informative comparison on those folds is against PSIS-LOO and moment matching, which
target the same quantity and get it wrong. The independent test, where \texttt{reloo}
can falsify RB-LOO because RB-LOO's own base importance sampling is an approximation,
is the few-groups experiment of Section~\ref{sec:basestress}, where RB-LOO diverges
from the exact refit on the folds its diagnostic flags. The singleton experiments
establish that RB-LOO beats the importance-sampling alternatives; the few-groups
experiment establishes when RB-LOO itself must defer to a refit.

\subsection{Simulation studies}\label{sec:sim}
Two replicated studies establish C1 and C2, one conjugate and one not. The Gaussian
random-intercept LMM ($60$ replicates of singleton-heavy designs, $\sigma_u=1.3$;
$2880$ folds, $128$ PSIS failures; conjugate Gibbs, with a brute-force refit gold
standard on a leverage-stratified subset) is the design-time case, where the
per-observation Fisher information is constant and structural leverage reduces to
group size. The logistic GLMM ($25$ replicates, $50$ groups of heterogeneous size;
$2036$ folds, $191$ failures; MCMC fits, \texttt{reloo} gold) is the post-fit case,
where the leverage's group-size content still contributes because group sizes vary.
Structural leverage separates the $\khat_i>0.7$ folds a priori (AUC $0.96$ and
$0.81$), and RB-LOO cures every failure at zero refit cost, with elpd RMSE against
exact refit well below PSIS-LOO on the high-$\khat$ folds (Table~\ref{tab:sim};
Figure~\ref{fig:g1} shows the Gaussian case). We report per-replicate Spearman with
MCSE as primary, since folds within a fit are not independent and the pooled AUC is
optimistic about effective sample size.

\begin{table}[htbp]\centering\small
\caption{Simulation studies. AUC is pooled over folds; Spearman is per-replicate
(mean $\pm$ MCSE). RMSE is elpd against exact refit on the high-$\khat$
folds.}\label{tab:sim}
\begin{tabular}{lccccc}
\toprule
study & failures & C1 Spearman & C1 AUC & cured & RMSE (RB / PSIS)\\
\midrule
Gaussian LMM ($60$ reps)  & $128$ & $0.75\pm0.01$ & $0.96$ & $100\%$ & $0.096 / 0.232$\\
Logistic GLMM ($25$ reps) & $191$ & $0.61\pm0.03$ & $0.81$ & $100\%$ & $0.015 / 0.060$\\
\bottomrule
\end{tabular}
\end{table}

\begin{figure}[tbp]\centering
\includegraphics[width=\textwidth]{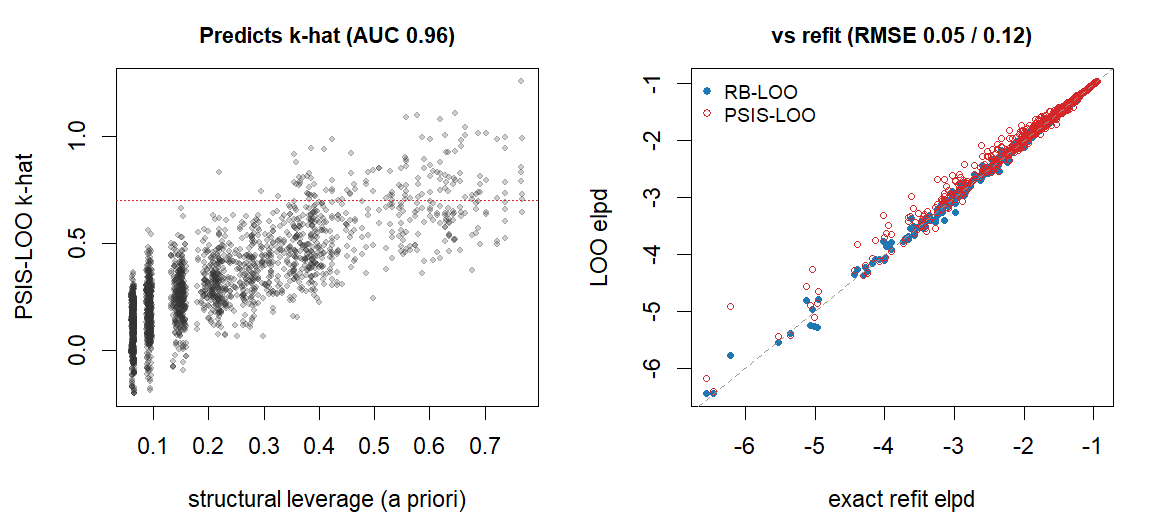}
\caption{Gaussian LMM. Left: a-priori structural leverage predicts PSIS-LOO $\khat$
(AUC $0.96$). Right: RB-LOO and PSIS-LOO elpd against the brute-force refit, RMSE
$0.050$ and $0.123$.}\label{fig:g1}
\end{figure}

\subsection{Moment-matching head-to-head (E-MM)}\label{sec:emm}
We refit the logistic GLMM design in \texttt{brms} (six replicate datasets, $77$
high-$\khat$ folds) and compared PSIS-LOO, moment matching
\citep{paananen2021implicitly}, and RB-LOO on the high-$\khat$ folds against
\texttt{reloo} gold. Moment matching performs one optimisation per high-$\khat$ fold;
RB-LOO performs none. Per-replicate elpd RMSE against the exact refit (mean $\pm$
MCSE) is $0.094\pm0.025$ for PSIS-LOO, $0.081\pm0.026$ for moment matching, and
$0.032\pm0.008$ for RB-LOO, so RB-LOO is about $2.5\times$ more accurate than moment
matching (pooled over folds, $3\times$: $0.140$, $0.131$, $0.044$). Both RB-LOO and
moment matching drive every $\khat_i$ below $0.7$; the difference is accuracy and
cost. Moment matching barely improves on PSIS-LOO ($0.131$ vs $0.140$) because an
affine transformation of the conditional draws cannot represent a data-driven
singleton random effect reverting to its prior under deletion, which is what RB-LOO
marginalises in closed form. Figure~\ref{fig:emm} shows the high-$\khat$ folds
against the exact refit.

\begin{figure}[tbp]\centering
\includegraphics[width=0.55\textwidth]{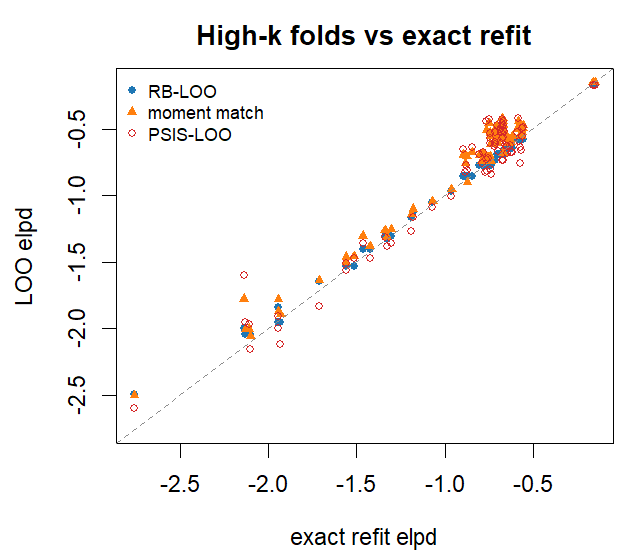}
\caption{Moment-matching head-to-head (E-MM), logistic GLMM. On the high-$\khat$
folds, RB-LOO (blue) sits on the exact-refit diagonal while moment matching (orange)
and PSIS-LOO (red) scatter above it.}\label{fig:emm}
\end{figure}

\subsection{Real data: epilepsy (E-real)}\label{sec:ereal}
The \texttt{brms::epilepsy} data holds $236$ seizure counts on $59$ patients over
four visits, overdispersed and a known PSIS-LOO stressor. A patient-effect Poisson
model, $\texttt{count}\sim\texttt{zBase}*\texttt{Trt}+\texttt{zAge}+(1\mid\texttt{patient})$
evaluated with one \texttt{rb\_loo(fit)} call, produces only five PSIS failures
(four visits per patient). RB-LOO cures them and is the most accurate estimator
against \texttt{reloo} (elpd RMSE $0.72$, versus PSIS-LOO $1.76$ and moment matching
$0.77$), but with constant group size the leverage's discrimination is fitted-mean
driven rather than group-size driven (AUC $0.84$); the design-time claim is carried
by the simulations of Section~\ref{sec:sim}. The observation-level model below is
the real stressor.

\paragraph{The overdispersion stressor.} Overdispersion in count data is standardly
modelled by an observation-level random effect (a Poisson-lognormal),
$\texttt{count}\sim\texttt{zBase}*\texttt{Trt}+\texttt{zAge}+(1\mid\texttt{obs})$, in
which every observation is its own singleton group: the maximally data-driven
regime. Here PSIS-LOO fails on \textbf{97 of 236} folds (max $\khat=1.34$).
\begin{itemize}
\item \textbf{C1.} Structural leverage predicts the failures with Spearman $0.60$
and AUC $0.80$ (with this many failures $\khat$ is no longer floored, so the rank
correlation is strong too). The groups are again all singletons, so the predictor is
monotone in the fitted count (large counts are influential) rather than
group-size driven.
\item \textbf{C2.} RB-LOO cures all $97$ (mean $\khat$ $0.68\to-0.01$; max
$1.34\to0.35$). Moment matching leaves $37$ of the $97$ folds with $\khat>0.7$; it
does not fix the singleton-driven failures.
\item \textbf{Accuracy} against \texttt{reloo} on all $97$ high-$\khat$ folds
(elpd RMSE): PSIS-LOO $0.638$, moment matching $0.658$, RB-LOO $\mathbf{0.041}$,
so RB-LOO is about $15\times$ more accurate than either and moment matching is here
no better than raw PSIS-LOO. RB-LOO matches the exact refit to within $0.04$ nats;
PSIS-LOO and moment matching are systematically over-optimistic, most on the
extreme-count folds (Figure~\ref{fig:ereal_olre}, right).
\item \textbf{Cost.} RB-LOO performs $0$ optimisations; moment matching performs
$97$ ($8$ min); \texttt{reloo} performs $97$ exact refits ($82$ minutes) for the
answer RB-LOO returns at once.
\end{itemize}
On a recognised PSIS-LOO stressor, then, the state-of-the-art importance-sampling
cure misses a third of the failures and is no more accurate than the incumbent,
while RB-LOO reproduces the exact refit at no cost (Figure~\ref{fig:ereal_olre}).

\begin{figure}[tbp]\centering
\includegraphics[width=\textwidth]{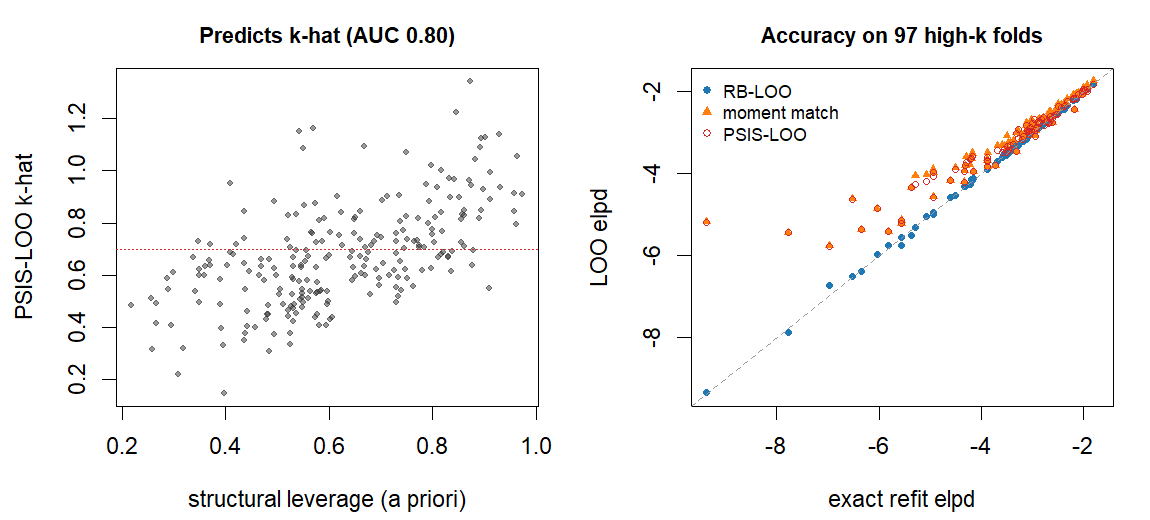}
\caption{Epilepsy, observation-level overdispersion model. Left: structural leverage
predicts PSIS-LOO $\khat$ (AUC $0.80$) across $97$ failures. Right: against the exact
refit, RB-LOO (blue) matches while PSIS-LOO (red) and moment matching (orange) are
over-optimistic, most on the extreme-count folds (RMSE $0.041$ against $0.638$ and
$0.658$).}\label{fig:ereal_olre}
\end{figure}

\subsection{The over-optimism changes the model-selection verdict (E-decision)}\label{sec:decision}
Does the elpd error matter for a decision? We compare two standard models for the
same overdispersed counts: $M_1$, the observation-level-effect Poisson-lognormal
above, and $M_2$, a negative-binomial regression
($\texttt{count}\sim\texttt{zBase}*\texttt{Trt}+\texttt{zAge}$, \texttt{negbinomial}),
which captures overdispersion without a per-observation latent. These are
near-equivalent predictive models, but $M_1$'s per-observation random effect makes
PSIS-LOO fail on $97$ of its folds while $M_2$'s PSIS-LOO is clean (no $\khat>0.7$).
Their elpd difference is
\begin{itemize}
\item PSIS-LOO (naive default): $\Delta\elpd = +39.3\pm8.0$ ($z=4.9$), $M_1$
decisively preferred;
\item \texttt{reloo} at the recommended $\khat>0.7$ threshold:
$\Delta\elpd = +11.0\pm3.2$ ($z=3.4$), $M_1$ still significantly preferred;
\item RB-LOO (zero refits): $\Delta\elpd = +3.1\pm3.1$ ($z=1.0$), the two models
indistinguishable.
\end{itemize}
The \texttt{loo\_compare} standard error $\sqrt{N}\,\mathrm{sd}(d_i)$ is itself
uncertain and mildly anti-conservative \citep{sivula2020uncertainty}, so we read the
$z$-values as descriptive; the robust finding is the shift $39.3\to11.0\to3.1$, which
does not depend on the standard error. To adjudicate, we exact-refit $40$ folds
stratified across the $\khat$ range. RB-LOO matches the exact refit fold-by-fold
(RMSE $0.017$, mean bias $+0.001$), whereas PSIS-LOO is over-optimistic with a bias
that grows in $\khat$ (mean $+0.088$; $+0.207$ on the $\khat>0.7$ folds), and is
already biased on the $0.55$--$0.7$ folds ($+0.092$) that \texttt{reloo} at threshold
$0.7$ does not refit. The standard \texttt{reloo} remedy therefore inherits residual
PSIS over-optimism on its sub-threshold folds; its $+11.0$ is itself optimistic, and
RB-LOO's $+3.1$, quadrature-converged (stable to $0.1$ nat across grids from
$\pm6\sigma$ to $\pm16\sigma$) and validated fold-wise against exact refits, is the
correct verdict. The comparison is consistent: each model's elpd is computed by a
method accurate for that model. $M_1$'s failing per-observation effect is what RB-LOO
corrects (validated to the exact refit); $M_2$ has no random effect, its PSIS-LOO is
reliable (max $\khat=0.51$), and exact-refitting all five of its $\khat>0.2$ folds
moves its total by $0.0$ nats, so $M_2$'s elpd is itself exact. RB-LOO is applied
only where the error is. The asymmetry is intrinsic: PSIS-LOO's over-optimism biases
model comparison when the candidates differ in latent structure (here, a
latent-variable model against a marginal one), where the per-model bias does not
cancel in the difference. A practitioner using default PSIS-LOO, or the recommended
\texttt{reloo} threshold, would report decisive or significant evidence for the more
complex model where an exact analysis finds none (Figure~\ref{fig:decision}).

\begin{figure}[tbp]\centering
\includegraphics[width=\textwidth]{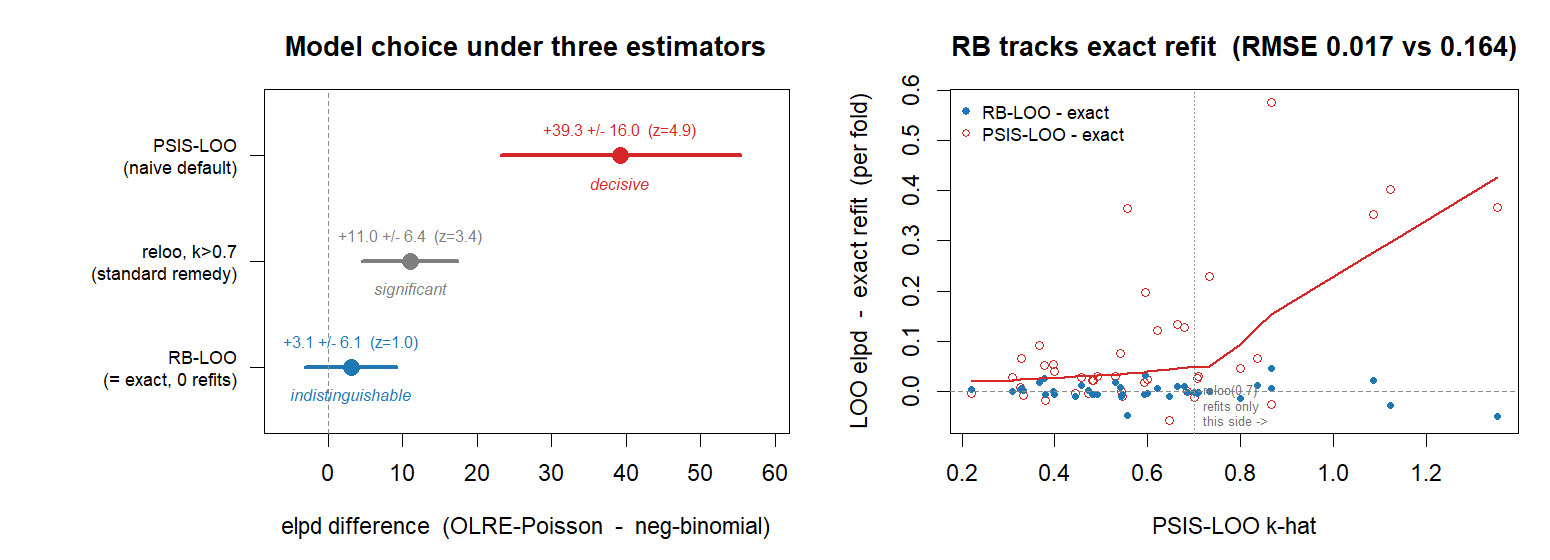}
\caption{Model selection under three estimators (E-decision), observation-level
Poisson against negative-binomial on epilepsy. Left: PSIS-LOO calls the difference
decisive ($z=4.9$), \texttt{reloo} at threshold $0.7$ significant ($z=3.4$), RB-LOO
indistinguishable ($z=1.0$). Right: against $40$ exact refits, RB-LOO (blue) is
unbiased across the $\khat$ range while PSIS-LOO (red) is over-optimistic, biased
even below $0.7$ where \texttt{reloo} does not refit.}\label{fig:decision}
\end{figure}

\subsection{When RB-LOO itself fails, and the refit flag (E-basestress)}\label{sec:basestress}
The experiments above stress PSIS-LOO; this one stresses RB-LOO. We use few-groups
Gaussian LMMs ($J=4$ to $6$ groups with singletons, weakly-identified $\sigma_u$;
$2340$ folds over replicated designs), where deleting one observation moves the base
posterior appreciably, so RB-LOO's residual base importance sampling is a real
approximation and a brute-force refit of every fold is an independent oracle that can
falsify it.
\begin{itemize}
\item RB-LOO's residual base-$\khat$ exceeds $0.7$ on $65$ folds, so RB-LOO is
strained here, unlike the singleton experiments.
\item RB-LOO is not exact on the flagged folds: elpd RMSE against the exact refit is
$0.068$ on the unflagged folds (base-$\khat\le0.7$) but $0.532$ on the flagged folds
(base-$\khat>0.7$), so RB-LOO's own diagnostic tracks its own error.
\item The residual base-$\khat$ separates the folds where RB-LOO diverges from the
exact refit (by more than $0.25$ nats) with AUC $0.93$; the base leverage (T2) does
so at AUC $0.90$.
\item Applying RB-LOO on the unflagged folds and an exact refit on the flagged folds
gives elpd RMSE $0.067$ against exact, refitting only $3\%$ of folds, against
$0.111$ for RB-LOO alone and $0.389$ for PSIS-LOO.
\end{itemize}
Here \texttt{reloo} does reveal RB-LOO error, where the flag predicts, and the
two-level triage delivers refit-quality LOO at a few percent of the refit cost
(Figure~\ref{fig:basestress}).

\begin{figure}[tbp]\centering
\includegraphics[width=\textwidth]{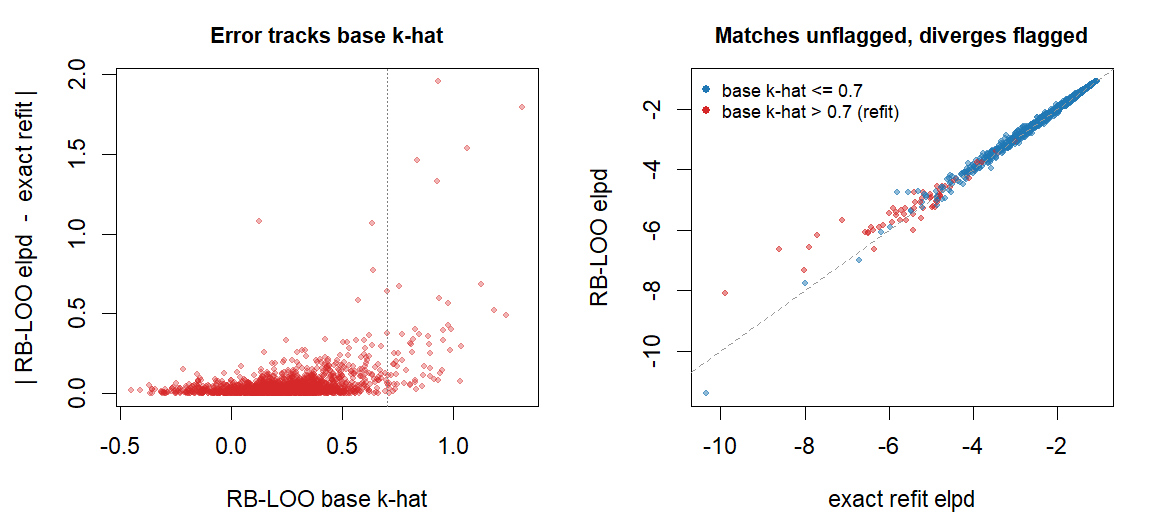}
\caption{RB-LOO's own limit and the refit flag (E-basestress), few-groups Gaussian
LMMs. Left: RB-LOO's elpd error against the exact refit grows with its own residual
base-$\khat$. Right: RB-LOO matches the exact refit on the unflagged folds (blue)
and diverges on the flagged folds (red), which the flag catches.}\label{fig:basestress}
\end{figure}

\subsection{The linearisation boundary (E-M5)}\label{sec:boundary}
Beyond the base importance sampling, the other approximation is the RB integrand
itself, which is analytic for Gaussian blocks and numerically exact for the scalar
GLMM families but only approximate for non-linear blocks. In a smooth logistic
change-point model (the \texttt{bjlm} setting) we mapped where the linearised
conditional degrades. A moment-matched Gaussian conditional stays calibrated across
sparsity and sharpness (Kolmogorov--Smirnov statistic $0.018$ to $0.043$), and the
raw Fisher/pooling approximation is adequate except at extreme change-point sharpness
($\rho=20$: KS $0.060$ against the $0.048$ critical value), where the mode$\neq$mean
gap and the dropped curvature bite. The degradation is monotone and predictable,
which is the limit of the method for non-conjugate blocks.

\subsection{Efficiency in the number of draws (E-Sscaling)}\label{sec:sscaling}
The ``zero refit cost'' claim concerns model refits; a second efficiency axis is the
number of posterior draws $S$. On a failing fold PSIS-LOO's importance weights have
(near-)infinite variance, so its elpd estimate is noisy and biased and improves only
slowly in $S$; RB-LOO marginalises the offending direction, so its weights have finite
variance and its estimate converges quickly. On the same Gaussian LMM design, drawing
one long conjugate chain and subsampling its first $S$ draws (so only $S$ varies), we
measure elpd RMSE against exact refit on the high-$\khat$ folds over
$S\in\{250,\dots,8000\}$ (Figure~\ref{fig:sscaling}). RB-LOO at $S=250$ (RMSE $0.17$)
is already more accurate than PSIS-LOO at $S=8000$ (RMSE $0.27$), roughly an order of
magnitude fewer draws, and over this range PSIS-LOO improves far more slowly and
plateaus well above RB-LOO. PSIS-LOO remains consistent as $S\to\infty$, but its
convergence on infinite-variance folds is slow enough that in practice it does not
reach RB-LOO's accuracy.

\begin{figure}[tbp]\centering
\includegraphics[width=0.62\textwidth]{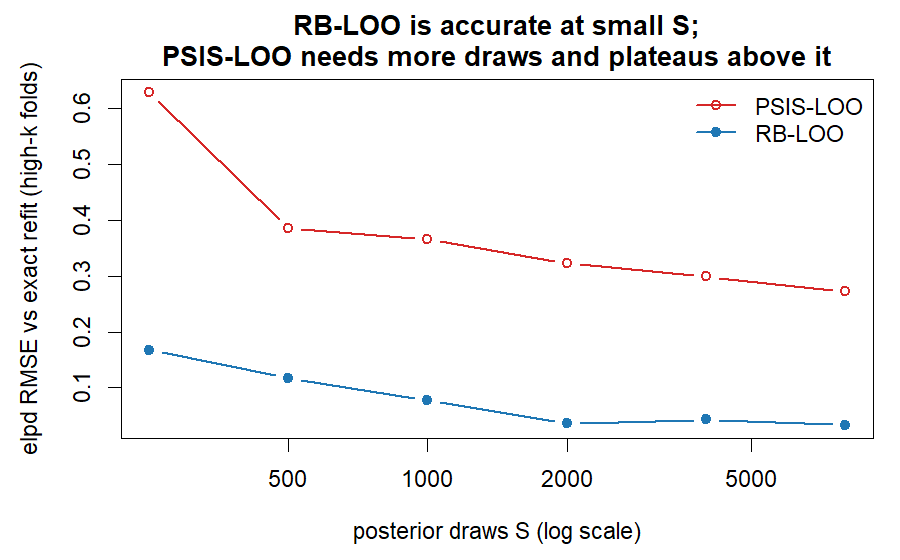}
\caption{Efficiency in the number of draws (E-Sscaling). On the high-$\khat$ folds,
RB-LOO reaches refit-quality accuracy with far fewer posterior draws than PSIS-LOO,
which plateaus above it.}\label{fig:sscaling}
\end{figure}

\FloatBarrier

\section{Discussion}

RB-LOO applies to a single random-intercept grouping factor with a scalar
random-effect block: Gaussian LMMs (analytic) and Bernoulli, binomial and Poisson
GLMMs (1-D quadrature). Its estimand is the integrated (marginal-over-RE)
predictive, against which we compare throughout \citep{merkle2019bayesian}. Random
slopes and correlated within-group effects are a low-dimensional extension, where
the quadrature becomes two- or three-dimensional but stays cheap. Crossed or
multiple grouping factors are the hard case: a deleted observation loads on several
fibers at once and there is no single block to collapse, so the package falls back
to plain PSIS-LOO, which we name as out of scope rather than degrade silently.
Practically, the method is a two-line change to a hierarchical LOO workflow: triage
on the pooling factor, integrate out the flagged folds, and refit only the residual
few, which removes most of the refits PSIS-LOO would otherwise demand. The geometry
is interpretation: the pooling factor is the vertical half of a case-deletion
influence identity whose horizontal half is the variance-component funnel, and
cross-validation reliability reads off the two halves.

\section{Honesty and limitations}
\begin{itemize}
\item \textbf{The marginalisation cure is not novel.} It is integrated
importance-sampling LOO \citep{merkle2019bayesian, vehtari2016glvm,
burkner2021efficient}, the standing \texttt{loo} recommendation for hierarchical
PSIS failures, and the same estimand as INLA's CPO
\citep{rue2009inla, held2010posterior}. We claim the a-priori triage, the packaged
closed forms, the head-to-head, and the geometry, not the marginalisation.
\item \textbf{The ``a-priori / design-time'' claim is exact only for the Gaussian
case} (constant Fisher information $\Rightarrow$ leverage $=$ group size). For GLMMs
the predictor uses one model fit (no importance weights, no case-deletion); on
constant-group-size real data (both epilepsy models) its discrimination is
fitted-mean-driven, closer to ``influential points have high $\khat$'' folklore than
to a pure group-size claim. The group-size content is demonstrated on the
heterogeneous-group simulations.
\item \textbf{``Matches exact refit'' is estimand-shared on singleton folds}, where
\texttt{reloo} and RB-LOO compute the same integral, so it cannot falsify RB-LOO
there; the meaningful comparison on those folds is against PSIS-LOO and moment
matching. The independent falsification is E-basestress (Section~\ref{sec:basestress}),
where RB-LOO does err and the flag catches it.
\item \textbf{Pooled figures over-count.} Folds within a fit are not independent;
we report per-replicate statistics with MCSE as primary (e.g. E-MM per-replicate
$2.5\times$ vs the pooled $3.0\times$) and flag pooled AUCs as optimistic about
effective sample size.
\item RB-LOO targets the integrated predictive; every accuracy comparison uses a
refit / \texttt{reloo} with the same target. The integrand is analytic (Gaussian),
numerically exact (scalar GLMM quadrature), or approximate (non-linear blocks),
mapped in Section~\ref{sec:boundary}.
\item The moment-matching comparator used the recommended robust split
transformation at default settings; we did not sweep its tuning, and we did not run
mixture importance sampling \citep{silva2024robust}, an alternative cure we discuss
but do not benchmark.
\item The base leverage is the theoretically-correct base-IS-failure predictor but
is empirically close to the pooling factor / residual base-$\khat$ for flagging RB
residual failures (AUC $0.90$ vs $0.93$); we present it as explanation, with the
residual base-$\khat$ as the operational flag.
\end{itemize}

\section*{Reproducibility}
All results are produced by the scripts listed below (in \texttt{analysis/} in the
repository; \texttt{R} 4.6.0, \texttt{loo}
2.9.0, \texttt{brms} 2.23.0 with \texttt{rstan} 2.32.7 / \texttt{cmdstan} 2.38.0;
GLMM fits use $4$ chains of $1000$--$2000$ iterations; seeds are fixed in each
script). The RB-LOO estimator is implemented once in the \texttt{rbloo} package
(\texttt{rb\_loo()}); the real-data results call it directly, and the simulation
studies use a self-contained reference implementation of the identical estimator
(agreeing up to the quadrature grid).

\begin{center}\small
\begin{tabular}{lll}
\toprule
Result & Float & Script\\
\midrule
Gaussian LMM & Fig.~\ref{fig:g1}, Tab.~\ref{tab:sim} & \texttt{rb\_loo.R}\\
Logistic GLMM, replicated & Tab.~\ref{tab:sim} & \texttt{replicate\_glmm.R}\\
Moment-matching head-to-head (E-MM) & Fig.~\ref{fig:emm} & \texttt{emm\_moment\_match.R}\\
Real data, patient effect & \S\ref{sec:ereal} & \texttt{ereal\_epilepsy.R}\\
Real data, overdispersion stressor & Fig.~\ref{fig:ereal_olre} & \texttt{ereal\_epilepsy\_olre.R}\\
Model selection (E-decision) & Fig.~\ref{fig:decision} & \texttt{e\_decision.R}, \texttt{verify\_m2\_full.R}, \texttt{adjudicate.R}\\
RB-LOO's own limit (E-basestress) & Fig.~\ref{fig:basestress} & \texttt{e\_basestress.R}\\
Linearisation boundary (E-M5) & \S\ref{sec:boundary} & \texttt{bjlm\_changepoint\_rb.R}\\
Draw-count efficiency (E-Sscaling) & Fig.~\ref{fig:sscaling} & \texttt{s\_scaling.R}\\
\bottomrule
\end{tabular}
\end{center}

\paragraph{Code and data availability.} The \texttt{rbloo} package and all
experiment scripts are available at
\url{https://github.com/ABindoff/rbloo}. The real datasets are public:
\texttt{epilepsy} ships with \texttt{brms}. No new data were collected.

\bibliographystyle{plainnat}

\end{document}